    \newtcbox{\feedback}{nobeforeafter,colframe=black,colback=white,boxrule=0.5pt,arc=2pt,
      boxsep=0pt,left=2pt,right=2pt,top=2pt,bottom=2pt,tcbox raise base}
    \newtheorem{prop}{Proposition}[section]
    \theoremstyle{definition}
    \newtheorem{rem}{Remark}
    \newtheorem{defn}{Definition}
    \newenvironment{example}
        {\pushQED{\qed}\examplex}
        {\popQED\endexamplex}
\newcolumntype{L}[1]{>{\raggedright\let\newline\\\arraybackslash}m{#1}}
\newcolumntype{C}[1]{>{\centering\let\newline\\\arraybackslash\hspace{0pt}}m{#1}}
\newcolumntype{R}[1]{>{\raggedleft\let\newline\\\arraybackslash\hspace{0pt}}m{#1}}
\newlength\ubwidth
\title{An Outcome Test of Discrimination for Ranked Lists}
\author{Jonathan Roth\thanks{Brown University. \href{mailto:jonathanroth@brown.edu}{jonathanroth@brown.edu}} \and Guillaume Saint-Jacques\thanks{Apple. \href{mailto:guillaume.saintjacques@gmail.com}{guillaume.saintjacques@gmail.com}} \and YinYin Yu\thanks{LinkedIn. \href{mailto:yinyyu@linkedin.com }{yinyyu@linkedin.com}}}
\begin{document}

\maketitle
\begin{abstract}
This paper extends \citet{Becker1957}'s outcome test of discrimination to settings where a (human or algorithmic) decision-maker produces a ranked list of candidates. Ranked lists are particularly relevant in the context of online platforms that produce search results or feeds, and also arise when human decisionmakers express ordinal preferences over a list of candidates. We show that non-discrimination implies a system of moment inequalities, which intuitively impose that one cannot permute the position of a lower-ranked candidate from one group with a higher-ranked candidate from a second group and systematically improve the objective. Moreover, we show that that these moment inequalities are the \textit{only} testable implications of non-discrimination when the auditor observes only outcomes and group membership by rank. We show how to statistically test the implied inequalities, and validate our approach in an application using data from LinkedIn.\footnote{LinkedIn is engaged in a long-term effort to measure the impact of our products and ensure that they do not reinforce existing social inequalities. To learn more about these ongoing efforts, please visit: \href{https://lnkd.in/equitable-outcomes-meaning}{https://lnkd.in/equitable-outcomes-meaning}. Prior research showed that the InstaJobs algorithm can function to lower inequality broadly between different groups of members \citep{saint-jacques_fairness_2020}.}
\end{abstract}

\section{Introduction}

Researchers are often interested in testing whether a human or algorithmic decision-maker is biased against members of a protected group (e.g. race or gender). Substantial attention has been paid to the case where the decision is binary -- e.g., whether or not to grant a loan, accept a student to college, hire a job candidate, etc. However, in a variety of relevant domains, the decision-maker produces a \textit{ranked list} of candidates. Ranked lists are particularly relevant in the context of online platforms: LinkedIn provides recruiters with an ordered list of candidates, Google returns an ordered list of search results, and Facebook and Twitter provide users with an ordered feed of posts. Ranked lists are relevant in other domains, as well: for example, hospitals participating in the National Residency Match Program (NRMP) provide a listwise ranking of candidate residents \citep{roth_evolution_1984}, and experiments in behavioral economics have asked participants to rank which of the other participants they would like to be grouped with \citep{castillo_discrimination_2010}. 

We consider the setting where an Auditor observes ranked lists of candidates produced by a Ranker. The Auditor observes each candidate's group status $G$ and outcome $Y$. Importantly, the Auditor does not observe all of the features $X$ that are available to the decision-maker. This reflects a realistic limitation to (human or algorithmic) audits in a variety of domains. When the decision-maker is human, it is nearly always the case that there are factors that are observed by the decision-maker but not by the auditor -- for example, an auditor of the NRMP would not be able to observe everything that occurred during the candidate's interview. Likewise, external auditors of tech platforms will almost never have access to all of the features used by the algorithm. Even internally within tech platforms it is often difficult to retrospectively reconstruct all of the features used by an algorithm, since all of the relevant user data may not be saved for privacy reasons. Moreover, even if all the data were observed, the covariates may be so high-dimensional that it is difficult to condition on the full set of covariates in any practical analysis.

We then ask how the Auditor can test whether the Ranker is biased against a protected group in forming their rankings. Our notion of bias extends \citet{Becker1957}'s notion of tase-based discrimination to the context of listwise rankings. In particular, we will say that the Ranker is unbiased if they sort candidates to maximize an objective function that values placing candidates with better outcomes earlier in the list, regardless of their group status. As we show, this form of objective nests the Net Discounted Cumulative Gain (NDCG) objective commonly used for search algorithms. It can also be motivated by a simple model in which the objective is to maximize total engagement, and engagement with a post is an increasing function of its quality and rank in the list.

Our first main theoretical result is that the null hypothesis of no bias implies a system of moment inequalities. Intuitively, these moments impose that whenever we see a particular configuration of the candidates (e.g. a woman first, man second, etc.), we should not be able to flip the order of some of the candidates and improve the objective function. For instance, we should not be able to increase the objective on average by flipping the position of the first two candidates whenever the first-ranked candidate is male and the second is female. 
 
Our second main theoretical result is that this system of moment inequalities is a \textit{sharp} testable implication of the hypothesis of no bias. Specifically, whenever the moment inequalities are satisfied, there exists a distribution for the unobserved covariates such that the observed data corresponds with the utility maximization of an unbiased decision maker. 
 
Our theoretical results allow us to leverage a large econometrics literature on testing moment inequalities to develop statistical tests of bias in settings with list-wise rankings (see \citet{canay_practical_2017, molinari_chapter_2020} for reviews of the moment inequality literature). We discuss several practical considerations, including: reducing the dimension of the large number of implied inequalities; adjusting for position effects -- wherein a candidate's realized position has a causal effect on their outcome; and incorporating observed features about the candidates. 
 
Finally, we showcase our proposed procedure in a validation exercise using data from the InstaJobs algorithm at LinkedIn. The InstaJobs algorithm is an algorithm for determining whether to send users a notification about a job they may be interested in. The algorithm generates a predicted score for each candidate, and sends notifications to candidates above a threshold. We use the scores constructed by the algorithm for each job to create a listwise ranking of the candidates, and apply our proposed tests to this ranking data. This construction allows us to validate the listwise outcome test by comparing its results to what we would obtain by directly examining the score used to generate the ranking (which is not typically available for a ranking algorithm in practice). We find using the listwise outcome test that when a female and male candidate are in adjacent ranks, the \textit{lower-ranked} male candidate systematically has better outcomes than the \textit{higher-ranked} female candidates. This finding accords with a direct examination of the scores used for rankings, which show that the algorithm is under-calibrated for men relative to women.

Our results relate to a large literature on detecting discrimination in economics, computer science, and other fields. See \citet{lang_race_2020} for a recent review of the economics literature on discrimination, which has primarily focused on the case where the decision-maker makes individual-level decisions (e.g. give out a loan, hire a candidate) rather than produce listwise rankings.\footnote{One exception to this is \citet{castillo_discrimination_2010}, who conduct a lab experiment in which participants rank whom they would like to be grouped with. \citet{castillo_discrimination_2010} focus on differences in average ranks across groups.} Several notions of fairness for listwise ranking algorithms have been considered previously in the computer science literature, as reviewed in \citet{pitoura_fairness_2021}. Multiple papers have considered demographic parity constraints, which require that exposure (i.e. the distribution of rankings) be similar across groups \citep{zehlike_fair_2017,celis_ranking_2017,singh_fairness_2018,geyik_fairness-aware_2019, ghosh_when_2021}. Other work has considered the notions of disparate treatment and disparate impact, which restrict that exposure be proportional to average group-level utility or outcomes \citep{singh_fairness_2018}. \citet{beutel_fairness_2019} propose a notion of fairness that extends the notion of equal opportunity \citep{hardt_equality_2016} to the listwise ranking setting: this requires that the probability that a candidate is ranked below another candidate with a worse outcome does not differ across groups. The notion of fairness we consider here is distinct, and is based on the question of whether the ranking is consistent with maximizing an objective function that does not depend on group status directly. As discussed in \citet{corbett-davies_measure_2018} and \citet{rambachan_economic_2020} in the context of binary classification problems, decision rules that maximize an unbiased utility function may violate demographic parity or equalized odds if the distribution of risks differs across populations, a problem known as \textit{inframarginality}. Similar distinctions arise in the context of listwise rankings. 


\section{Model}

The model we consider consists of a Ranker and an Auditor. The Ranker -- which could be either an algorithm or human -- observes an unordered list of candidates and their characteristics, and produces a ranked list of the candidates. The Auditor then observes the ranked list of candidates and their outcomes, and wants to test whether the Ranker is biased. 
\subsection{Set-up}
\paragraph{Data-generating Process.} The Ranker is presented with queries indexed by $q$ in which they are asked to rank $J$ candidates with characteristics $X_{1q},...,X_{Jq}$ and group status $G_{1q},...,G_{Jq}$. We denote by $\mathcal{I}_q = \{ (X_{jq}, G_{jq} \}_{j=1}^J$ the information provided to the Ranker for query $q$. After observing $\mathcal{I}_q$, the Ranker produces a ranked list of the candidates -- formally this is a map $j_q: \{1,...,J\} \rightarrow \{1,...,J\}$ where $j_q(r)$ corresponds with the index of the candidate in rank $r$. Rank $1$ is the best rank, and we suppose that there are no ties in the rankings, so that $j_q$ is one-to-one. After the candidates are ranked, they realize outcomes $Y_{1q},...,Y_{Jq}$. We suppose for now that the outcomes $Y_{jq}$ do not depend on the rankings, and return to the scenario where outcomes may be affected by the rankings in Section \ref{sec: extensions}.

\paragraph{Auditor.} There is an Auditor tasked with evaluating whether the Ranker is biased. For each query $q$, the Auditor sees the rank-ordered list of outcomes $Y_q = (Y_{j_q(1)q},...,Y_{j_q(J)q})$ as well as the group statuses by rank, $G_q = (G_{j_q(1)q},...,G_{j_q(J)q})$. Importantly, the Auditor does not see the characteristics $X_q = (X_{j_q(1)q},...,X_{j_q(J)q})$ used to form the rankings. We show that similar results arise if the Auditor observes a subset of the variables in $X_q$ in Section \ref{sec: extensions}.

\subsection{Tests of Unbiasedness}

We now describe the notion of unbiasedness that we will test, which extends the logic of \citet{Becker1957} to the setting of list-wise rankings. Specifically, we will be interested in testing the hypothesis that the Ranker chooses the ranking $j_q(\cdot)$ to maximize 
\begin{equation} E\left[ \sum_r w_r Y_{j_q(r)q}  \,|\, \mathcal{I}_q \right], \label{eqn: expected weighted avg of Y} \end{equation}
where the $w_r$ are a strictly decreasing sequence of positive weights. Intuitively, the fact that the $w_r$ are decreasing means that the Ranker prefers to place candidates with higher values of $Y$ earlier in the list. This corresponds with expected utility maximization if the Ranker's utility function is $U(Y_q,G_q) = \sum_r w_r Y_{j_q(r)q}$, which depends only on the rank-ordered outcomes for the candidates, and not directly on their group-status $G_q$. 

\begin{defn}
We say that the Ranker is unbiased if they choose $j_q(\cdot)$ to maximize 
(\ref{eqn: expected weighted avg of Y}) for a decreasing sequence of positive weights $w_r$. 
\end{defn}

\noindent It is straightforward to show that an unbiased Ranker chooses the order $j_q(\cdot)$ that corresponds with sorting the candidates based on their expected outcomes given the Ranker's information set ($E[Y_{jq} | \mathcal{I}_q]$).

An important type of violation of this hypothesis is when the Ranker instead maximizes expected utility for the utility function
\begin{equation}
U(Y_q,G_q) = \sum_r w_r (Y_{j_q(r)q} - \tau G_{j_q(r)}) , \label{eqn: biased utility function}
\end{equation}
so that the outcome is effectively penalized by $\tau$ for candidates from group $G=1$ relative to group $G=0$. When the decision-maker is human, such a utility function may arise owing to racial animus against the $G=1$ group. This is thus a natural extension of \citet{Becker1957}'s notion of taste-based discrimination to the setting of listwise rankings. Although an algorithm will not typically have explicit racial animus, an algorithm may (approximately) maximize such an objective if there are biases in the training data so that the expected value of the outcome in the training data for $G=1$ is $\tau$ below its value in the target population. 

\begin{example}[NDCG] \label{example:ndcg}
A common objective function used for ranking algorithms is Net Discounted Cumulative Gain, abbreviated NDCG \citep{jarvelin_ir_2000}. Intuitively, NDCG is a weighted average of outcomes by position, normalized by the score that would be obtained if all candidates were sorted perfectly. Formally, NDCG is defined as the ratio $DCG/IDCG$, where 
$$DCG = \sum_r Y_{j_q(r)}^* / log_2(r+1),$$
\noindent $Y^*_j$ is a relevance score, and $IDCG$ is the idealized value of DCG that would be realized if the candidates had been sorted perfectly in decreasing order of $Y^*$,
$$IDCG = \sum_r Y_{j^*(r)}^* / log_2(r+1),$$
\noindent where $j^*(r)$ is the index with the $r$th largest value of $Y_j^*$. It is then apparent that maximizing expected NDCG is equivalent to maximizing an objective of the form (\ref{eqn: expected weighted avg of Y}), with $Y_j = Y_j^* / IDCG$, and $w_r = log_2(r+1)^{-1}$.

\end{example}

\begin{example}[Maximizing total engagement] \label{example:total_clicks}
It is well-known that users tend to engage more with posts earlier in a list than later in the list. Suppose that placing a post one position later in the list causally reduces the amount of user engagement by the factor $1+\gamma$. (The parameter $\gamma$ could be estimated in an experiment that randomizes the order of the candidates.) Then for $w_r = 1/(1+\gamma)^r$, we have that the total number of clicks generated by a given ranking is $\sum_r w_r Y_{j_q(r)q}$, where $Y_{jq}$ is the number of clicks that candidate $j$ would have received if placed in the first position in the query (and thus is not affected by rank).\footnote{If in practice candidate $j$ receives engagement $Y^*_{jq}$ and is placed in position $a$, we can construct $Y_{jq} = (1+\gamma)^a Y^*_{jq}$ as the ``position adjusted outcome.''} Thus, the objective function above corresponds with maximizing total engagement after accounting for the causal effect of position on engagement.
\end{example}

\section{Theoretical Results}

We now provide two main theoretical results. First, we show that unbiasedness by the Ranker implies a system of conditional moment inequalities. Second, we show that these moment inequalities are a sharp implication of unbiased behavior.

\begin{prop}\label{prop: moment inequalities}
If the Ranker is unbiased, then for all $a < b$,
\begin{equation}
E[ Y_{j_q(a)q} - Y_{j_q(b)q} \,|\, G_q = g] \geq 0    \label{eqn: moment inequalities}
\end{equation}
\noindent for all $g$ such that $P(G_q =g) > 0$. 
\end{prop}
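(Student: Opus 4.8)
The plan is to exploit the remark immediately preceding the proposition: an unbiased Ranker sorts candidates in decreasing order of their conditional expected outcomes $\mu_{jq} := E[Y_{jq} \,|\, \mathcal{I}_q]$. First I would establish this sorting characterization via a rearrangement argument. Writing the objective (\ref{eqn: expected weighted avg of Y}) as $\sum_r w_r \mu_{j_q(r)q}$ and recalling that the weights $w_r$ are strictly decreasing and positive, the rearrangement inequality implies that any maximizing permutation must pair the largest weight with the largest conditional mean, the second-largest weight with the second-largest mean, and so on. Consequently, whenever $a < b$ we have $w_a > w_b$, and the optimal ranking must satisfy $\mu_{j_q(a)q} \geq \mu_{j_q(b)q}$ (with equality permitted only when the two conditional means coincide, in which case either order is optimal). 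Equivalently,
\begin{equation*}
E[ Y_{j_q(a)q} - Y_{j_q(b)q} \,|\, \mathcal{I}_q ] = \mu_{j_q(a)q} - \mu_{j_q(b)q} \geq 0
\end{equation*}
holds pointwise, for almost every realization of $\mathcal{I}_q$.

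The second step is to pass from conditioning on the full information set $\mathcal{I}_q$ to conditioning on the coarser event $\{G_q = g\}$. The key observation is that the rank-ordered group vector $G_q = (G_{j_q(1)q}, \ldots, G_{j_q(J)q})$ is a deterministic function of $\mathcal{I}_q$: the group labels are themselves components of $\mathcal{I}_q$, and the ranking map $j_q(\cdot)$ is determined by $\mathcal{I}_q$. Hence the $\sigma$-algebra generated by $G_q$ is contained in that generated by $\mathcal{I}_q$, and the law of iterated expectations applies:
\begin{equation*}
E[ Y_{j_q(a)q} - Y_{j_q(b)q} \,|\, G_q = g ] = E\left[ \, E[ Y_{j_q(a)q} - Y_{j_q(b)q} \,|\, \mathcal{I}_q ] \,\big|\, G_q = g \right].
\end{equation*}

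Since the inner conditional expectation is nonnegative pointwise by the first step, the outer expectation — which averages a nonnegative random variable over the distribution of $\mathcal{I}_q$ given $G_q = g$ — is itself nonnegative. This delivers (\ref{eqn: moment inequalities}) for every $g$ with $P(G_q = g) > 0$, completing the argument.

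I expect the only genuinely delicate point to be the measurability claim in the second step, namely that $G_q$ is $\mathcal{I}_q$-measurable. This is precisely what legitimizes the tower property and, intuitively, what allows the pointwise inequality to survive averaging over the \emph{unobserved} covariates: the auditor's conditioning event is coarser than the ranker's information, so the favorable ordering cannot be undone by the auditor's ignorance of $X_q$. The rearrangement step is standard, and the only care required there is to state it as a weak inequality, so that ties in the conditional means — which are permitted even though ties in the realized ranking are ruled out by assumption — do not cause problems.
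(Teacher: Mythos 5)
Your proof is correct, but it takes a genuinely different route from the paper's. The paper gives a one-step revealed-preference argument: it constructs the explicit feasible deviation that permutes positions $a$ and $b$ exactly on the event $\{G_q = g\}$, computes the resulting change in the \emph{unconditional} objective as $P(G_q = g)\,(w_a - w_b)\,E[Y_{j_q(a)q} - Y_{j_q(b)q} \mid G_q = g]$, and concludes from optimality (passed from conditional to unconditional via iterated expectations) that this quantity is nonnegative; dividing by the strictly positive factor $P(G_q = g)(w_a - w_b)$ gives the result. You instead first establish the sorting characterization --- any maximizer of $\sum_r w_r \mu_{j_q(r)q}$ with strictly decreasing weights must order candidates by weakly decreasing conditional means, giving the pointwise inequality $E[Y_{j_q(a)q} - Y_{j_q(b)q} \mid \mathcal{I}_q] \geq 0$ almost surely --- and then average it over $\{G_q = g\}$ via the tower property. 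The two arguments hinge on the same measurability fact, namely that $\{G_q = g\}$ lies in the Ranker's information $\sigma$-algebra: in the paper this is what makes the deviation feasible, in yours it is what licenses the tower property; and both use the strict decrease of the weights. What the paper's version buys is minimality --- no characterization of the optimum is needed, only that one particular swap cannot improve the objective --- and it previews the swap logic that reappears in the sharpness result. What your version buys is a stronger intermediate conclusion: the inequality holds conditional on the full information set, hence conditional on \emph{any} $\mathcal{I}_q$-measurable event, which is exactly the form needed for the paper's later extension to partially observed features $X^O_q$. One caveat applies to both proofs equally: treating $G_q$ as $\mathcal{I}_q$-measurable presumes the Ranker's rule is a deterministic function of $\mathcal{I}_q$; with randomized tie-breaking one would need to condition on $\mathcal{I}_q$ augmented by a randomization device assumed independent of outcomes given $\mathcal{I}_q$.
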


\begin{proof}
Consider the counterfactual assignment rule that permutes $j_q(a)$ and $j_q(b)$ whenever $G_q = g$, and otherwise corresponds with the observed choice rule. Denote the rankings of this rule by $\tilde{j}_q(\cdot)$. Note that if $j_q(\cdot)$ maximizes $E\left[ \sum_r w_r Y_{j_q(r)q}  \,|\, \mathcal{I}_q \right]$ for all $\mathcal{I}_q$, then by iterated expectations it must maximize $E\left[ \sum_r w_r Y_{j_q(r)q} \right]$. Then, the difference in objective value from using the observed choice rule versus the permuted choice rule is given by
$$E\left[ \sum_r w_r Y_{j_q(r)q} \right] - E\left[ \sum_r w_r Y_{\tilde{j}_q(r)q} \right] =  P(G_q = g) (w_a - w_b) E[ Y_{j_q(a)q} - Y_{j_q(b)q} \,|\, G_q = g].$$

\noindent If the observed choice rule is optimal, then the expression in the previous display must be non-negative. However, $ P(G_q = g) (w_a - w_b) > 0$ by assumption, from which the result follows.
\end{proof}

\noindent It is straightforward to show that the change in the objective from permuting the candidates in positions $a$ and $b$ is proportional to $Y_{j_q(a)q} - Y_{j_q(b)q}$. Proposition \ref{prop: moment inequalities} thus intuitively states that if the Ranker is unbiased, then we shouldn't be able to permute the position of the candidates in positions $a$ and $b$ whenever we see $G_q = g$ and improve the objective. In other words, higher ranked candidates should always have higher values of $Y$ on average, regardless of the group orientation of the query.

\begin{example}
Suppose that $J=2$ and in every query there is one man and one woman, so that $G = (1,0)$ or $G = (0,1)$. Then (\ref{eqn: moment inequalities}) is equivalent to
\begin{align*}
& E[Y_{j_q(1)q} - Y_{j_q(2)q} \,|\, G_1 = 1, G_2 = 0] \geq 0 \\
& E[Y_{j_q(1)q} - Y_{j_q(2)q} \,|\, G_1 = 0, G_2 = 1] \geq 0
\end{align*}
\noindent The first inequality says that the outcome for the higher-ranked candidate should be larger on average when the higher-ranked candidate is male (and hence the lower-ranked candidate is female), whereas the second inequality is analogous for the case where the higher-ranked candidate is female.  
\end{example}

\begin{rem}[Sufficiency of adjacent ranks] \label{rem: can compare adjacent ranks}
Proposition \ref{eqn: moment inequalities} is stated in terms of comparisons of all ranks $(a,b)$ with $a<b$. It suffices to consider adjacent ranks, i.e. pairs of the form $(a,b) = (k, k+1$). This is because if the inequality in (\ref{eqn: moment inequalities}) holds for $(a,b) = (k,k+1)$ and $(a,b) = (k+1,k+2)$, then adding the two inequalities implies that it also holds for $(a,b) = (k,k+2)$, and so on.
\end{rem}
Our next result formalizes the notion that the inequalities in Proposition \ref{prop: moment inequalities} are the only testable implication of unbiasedness by the Ranker. It states that if the observed data satisfies the moment inequalities in Proposition \ref{prop: moment inequalities}, then there exists a latent distribution for the covariates $X_q$ such that the observed distribution corresponds with unbiased behavior by the decision-maker.

\begin{prop} \label{prop: moment inequalities are sharp}
Suppose the inequalities in Proposition \ref{prop: moment inequalities} are satisfied. Then there exists a distribution for $\mathcal{I}_q$ such that the observed distribution $(Y_q, G_q)$ corresponds with the decision rule that maximizes $E\left[ \sum_r w_r Y_{j_q(r)q} \,|\, \mathcal{I}_q \right]$.
\end{prop}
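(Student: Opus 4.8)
The plan is to prove sharpness by explicitly constructing a latent information structure $\mathcal{I}_q$ under which an unbiased Ranker, sorting by conditional expected outcomes, reproduces exactly the observed joint distribution of $(Y_q, G_q)$. The key observation is that the Auditor only ever sees outcomes indexed by rank, so I have complete freedom in how I attach unobserved covariates $X_q$ to candidates --- I only need the \emph{implied} sorting to agree with the observed ranks. A natural device is to let the latent covariate $X_{jq}$ directly encode the candidate's expected outcome, i.e. set $E[Y_{jq} \mid \mathcal{I}_q]$ equal to some value $\mu_{jq}$ that is monotone in rank, so that sorting candidates in decreasing order of $\mu$ mechanically recovers the observed ranking.

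First I would fix the observed distribution of $(Y_q, G_q)$ --- equivalently, for each group-configuration $g$ the conditional distribution of the rank-ordered outcome vector. For each such $g$, I would construct a conditional distribution over $\mathcal{I}_q$ as follows. Draw the rank-ordered outcome vector $(Y_{j_q(1)q}, \dots, Y_{j_q(J)q})$ from its observed conditional law given $G_q = g$; then assign to the candidate occupying rank $r$ a latent feature $X_{j_q(r)q}$ whose only role is to pin down a conditional mean $\mu_r := E[Y_{jq} \mid \mathcal{I}_q]$ for that candidate. The cleanest choice is to let $\mathcal{I}_q$ be rich enough that the Ranker's conditional expectation for the rank-$r$ candidate equals a strictly decreasing sequence $\mu_1 > \mu_2 > \dots > \mu_J$ --- for instance $\mu_r = C - r$ --- so that the unique maximizer of \eqref{eqn: expected weighted avg of Y} under this information structure is precisely the observed ordering. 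Since an unbiased Ranker sorts by $E[Y_{jq} \mid \mathcal{I}_q]$ (as noted right after the definition of unbiasedness), this guarantees the constructed decision rule yields the observed ranks.

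The one place where the moment inequalities must actually be \emph{used} is in verifying that this construction is internally consistent: the conditional means $\mu_r$ I assign must be compatible with the observed outcome distribution in the sense that the law of large numbers / iterated expectations is respected. Concretely, I need the fitted conditional means to be consistent with $E[Y_{j_q(r)q} \mid G_q = g]$ being weakly decreasing in $r$ --- and this is exactly what Proposition~\ref{prop: moment inequalities} (together with Remark~\ref{rem: can compare adjacent ranks}, reducing to adjacent comparisons) supplies. I would therefore set $\mathcal{I}_q$ so that the latent conditional mean for the rank-$r$ candidate reproduces the observed rank-conditional mean $E[Y_{j_q(r)q} \mid G_q = g]$; the inequalities ensure these targets are monotone in $r$, so a valid strictly-decreasing refinement exists and the sorting is consistent. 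The main obstacle is making this refinement rigorous when the observed rank-conditional means are only \emph{weakly} decreasing (ties across ranks): I would handle ties by perturbing the assigned $\mu_r$ by an arbitrarily small amount, or by enlarging $\mathcal{I}_q$ with an auxiliary tie-breaking coordinate, and then verify that the resulting joint law of $(Y_q, G_q)$ is unchanged because the Auditor never observes $X_q$. Checking that this constructed $\mathcal{I}_q$ induces \emph{exactly} the target distribution of $(Y_q, G_q)$ --- not merely the correct ordering --- is the crux, and it follows because I build the distribution of rank-ordered outcomes to match the observed law by construction while the covariates serve only to rationalize the ordering.
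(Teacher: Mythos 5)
Your proposal is correct and takes essentially the same approach as the paper's own proof: you encode the full rank-by-group configuration $g$ in the latent covariates, give the rank-ordered outcomes $Y_q$ their observed conditional law given $G_q = g$ so that the Ranker's conditional mean for the rank-$r$ candidate is exactly $E[Y_{j_q(r)q} \mid G_q = g]$, and invoke the moment inequalities to conclude these means are weakly decreasing in $r$, so the observed ordering maximizes the objective while the joint law of $(Y_q, G_q)$ matches the data by construction. The only caveat is that your detour on strict monotonicity is unnecessary --- the proposition only requires the observed ranking to be \emph{a} maximizer, so weak decrease suffices (as the paper implicitly uses), and of your two tie-handling fixes the perturbation one would actually be incoherent (you cannot perturb $E[Y_{j_q(r)q} \mid \mathcal{I}_q]$ while holding the conditional law of $Y_q$ fixed), whereas the auxiliary tie-breaking coordinate amounts to selecting among maximizers, which is fine but not needed.
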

\begin{proof}
We construct a distribution for $\mathcal{I}_q$ that satisfies the proposition. Intuitively, we construct $\mathcal{I}_q$ such that whenever the decisionmaker chooses $G_q = g$, their expectation for the candidate in position $r$ is precisely $E[ Y_{j_q(r)q} \,|\, G_q =g ]$, and thus the observed ranking is optimal since this expectation is monotonically decreasing in the rank. Formally, let $\mathcal{I}_q^G = \{G_{jq}\}_{j=1}^{J}$ and $\mathcal{I}_q^X = \{X_{jq}\}_{j=1}^{J}$. Construct $\mathcal{I}_q^G$ to have the distribution corresponding with $\{G_q\}$, where $\{G_q\}$ denotes the unordered list of elements in $G_q$. Let $\mathcal{I}_q^X$ have support that is one-to-one with the support of $G_q$. In a slight abuse of notation, we will write $\mathcal{I}_q^X = g$ to denote that $\mathcal{I}_q^X$ takes the value in its support mapping to $g$. We construct $\mathcal{I}_q^X$ such that $P(\mathcal{I}_q^X = g \,|\, \mathcal{I}_q^G = \{g\}) = P(G_q = g \,|\, \{G_q\} = \{g\}). $ Next, we construct $Y_q \,|\, \mathcal{I}^X_q = g$ to have the same distribution as $Y_q \,|\, G_g = g$. Equation (\ref{eqn: moment inequalities}) then implies that if $r_1 < r_2$, then $E[Y_{j_q(r1)q} - Y_{j_q(r2)q} \,|\, \mathcal{I}_q^X = g] \geq 0$, and thus $j_q$ maximizes the objective. Moreover, by construction of the conditional probabilities, the implied distribution of $(Y_q, G_q)$ matches that in the data.
\end{proof}

\begin{rem}[Relationship to marginal outcome tests]
Unbiased behavior by the ranker implies that $E[Y_{j_q(a)} | \mathcal{I}_q]$ should be \textit{equal} to $E[Y_{j_q(a+1)} | \mathcal{I}_q]$ if the Ranker is indifferent between the candidates in positions $a$ and $a+1$ (marginal candidates). However, since it is not observed which candidates are marginal, the only testable implications involve comparisons between adjacent ranks, even when these are \textit{inframarginal}, meaning that the ranker strictly prefers the candidate in rank $a$ to the one in $a+1$. As a result, the inequality in (\ref{eqn: moment inequalities}) may be strict if the Ranker is unbiased -- that is, candidates ranked in position $a$ may be strictly better on average than candidates ranked in position $b>a$. By continuity arguments, it follows that if the Ranker has a small amount of bias (e.g. $\tau > 0$ is very small in (\ref{eqn: biased utility function})), then the inequalities in equation (\ref{eqn: moment inequalities}) might still be satisfied, so that discriminatory behavior is not detectable. However, the inframarginality problem should be less severe when many candidates are observed, since the expected difference between the $a$ and $(a+1)$-th best candidates should be small. For example, if $E[Y_{jq} | I_q]$ is $i.i.d.$ uniformly distributed across $j$, then for an unbiased ranker $E[Y_{j_q(a)q} - Y_{j_q(a+1)q} | I_q]$ is $O_P(1/J)$ uniformly in $a$.\footnote{This follows from the fact that the difference in consecutive order statistics of the uniform distribution is distributed $Beta(1,N)$.} 
\end{rem}

\section{Testing}

We now discuss how one can test the hypothesis that a Ranker is unbiased given a sample of queries $q=1,...,Q$. For simplicity, we will focus on the case where the queries $q$ are i.i.d., although the approach we describe will extend easily to clustered or weakly dependent data.

\paragraph{Pointwise tests.} We first note that for fixed values of $a,b$ and $g$, the hypothesis that equation (\ref{eqn: moment inequalities}) holds is simply the hypothesis that the population mean of $Y_{j_q(a)q} - Y_{j_q(b)q}$ is larger than zero among the population of queries with $G_q = g$. This individual hypothesis can be tested with a standard one-sided $t$-test for the mean of the population with $G_q = g$. Such tests will be asymptotically valid under standard regularity conditions that allow for an application of a central limit theorem. These individual tests for fixed values of $(a,b,g)$ are useful in that they can help identify where (if anywhere) the Decisionmaker appears to be making biased decisions, which may be useful in addressing any detected bias.

\paragraph{Joint tests.} Although the individual tests described above will be valid for each individual hypothesis for a fixed $(a,b,g)$ combination, it is well known that there is a problem of multiple hypothesis testing if such tests are conducted for each possible value of $(a,b,g)$. Fortunately, a large literature in econometrics has developed joint tests for a system of moment inequalities such as (\ref{eqn: expected weighted avg of Y}) for all relevant $(a,b,g)$; see, for instance, \citet{canay_practical_2017, molinari_chapter_2020} for recent reviews.

\subsection{Implementation and Extensions \label{sec: extensions}}
\paragraph{Choice of Moments.} One important practical point for implementation is that the dimensionality of the vector $G_q$ can be quite large in practice: for example, if $g$ takes on two values and there are 30 candidates in the query, then there are $2^{30} \approx 10^9$ possible values of $G_q$. To reduce the dimensionality in our implementation below, when comparing the outcomes of rank $a$ to rank $b$, we condition on the group membership of the candidates in ranks $a$ and $b$ (i.e. $G_{q,a},G_{q,b}$), but not on the group status of the other candidates in the list, which substantially reduces the dimensionality. Although in theory this may reduce the power of the test, we suspect that most of the pertinent information about the outcomes for ranks $a$ and $b$ is captured by their own group status, and not by the group status of other candidates in the query. In our implementation we also only test moments comparing adjacent ranks, which as discussed in Remark \ref{rem: can compare adjacent ranks}, is equivalent to the null hypothesis across all ranks.

\paragraph{Position effects.} Our analysis so far has assumed that the outcome for candidate $j$ does not depend on their rank in the query. In practice, however, there may be causal effects of position on the outcome, e.g. the same candidate may get more clicks if ranked higher in a search on LinkedIn. If the Auditor knows the causal effect of position -- e.g., from an experiment that randomizes search order -- the inequalities can be adjusted to account for this. In particular, suppose the Auditor knows that putting a candidate in position $a$ increases outcomes by a factor of $(1+\gamma)$ relative to ranking the same candidate in position $b$. Then the change in the objective from swapping the candidate in positions $a$ and $b$ would be proportional to $(1 + \gamma) Y_{j_q(b)q} - Y_{j_q(a)q}$, i.e. a comparison of the outcomes that each candidate would have reached if they had been placed in position $b$. Optimizing behavior by the Ranker implies that this swap can't improve the objective, and thus yields the inequality
$$E[ Y_{j_q(a)q} - (1+\gamma) Y_{j_q(b)q} \,|\, G_q = g] \geq 0$$
instead of (\ref{eqn: moment inequalities}). Note that this is equivalent to testing (\ref{eqn: moment inequalities}) where the outcome used is the ``position-adjusted'' outcome rather than the observed outcome, as in Example \ref{example:total_clicks}.

\paragraph{Partially observed features.} Our analysis so far has assumed that the features $X$ are completely unobserved by the auditor. In practice, a subset of the features used by the decisions-maker may be observed: that is, $X_q = (X^O_q, X^U_q)$, where $X^O_q$ are features observed by the auditor and $X^U_q$ are unobserved features. In this case, the same arguments above can be applied within each group of candidates with the same observable features -- i.e, conditional on $X^O_q$. Thus, the sharp testable implication of unbiased rankings in this case is that
$$E[ Y_{j_q(a)q} - Y_{j_q(b)q} \,|\, G_q = g, X^O_q = x] \geq 0$$ 
\noindent for almost-every $(g,x)$. Such inequalities can be tested using methods for \textit{conditional} moment inequalities \citep{andrews_inference_2013}. 

\section{Validation Using LinkedIn Data}

We now provide a validation exercise using data from LinkedIn's InstaJobs algorithm.\footnote{The version of this algorithm studied in this paper has subsequently been deprecated.}

\subsection{InstaJobs}

\paragraph{Background.} InstaJobs is an algorithm that sends LinkedIn members (candidates) a notification about a job posting that they may be interested in.\footnote{Here ``candidates'' refers to LinkedIn members who are candidates for receiving a notification regarding a job posting.} The algorithm uses features about the job posting and the candidates to predict the probability the candidate will apply for the job as well as the probability the application will receive attention from the recruiter. Specifically, the algorithm scores candidates based on the predicted value for the outcome
\begin{equation}
Y^*_{jq} = \alpha*1[job\ applied]_{jq}+ 1[job\ applied\ \&\ application\ received\ recruiter\ attention]_{jq} \label{eqn: instajobs objective}    
\end{equation}

\noindent where $\alpha \in (0,1)$ is a (proprietary) scalar parameter that determines the relative weight placed on applications versus recruiter attention. All candidates with a score above a certain threshold are sent a notification.

\paragraph{Creating Listwise Rankings.} Importantly, InstaJobs is not a listwise ranking algorithm, but rather a pointwise classification algorithm that creates a score and takes a binary decision based on this score. However, we can generate data \textit{as if} InstaJobs were a list-wise ranking algorithm by rank-ordering the candidates for each job by the score the InstaJob algorithm assigned them. An advantage of this approach is that, since we have access to the true scores underlying the InstaJobs algorithm, we can validate the listwise approach by investigating whether it produces similar answers to a direct investigation of the scores generating the rankings. Also, since InstaJobs is not actually a ranking algorithm, its results should not be subjected to potential position bias.

\paragraph{Practical Implementation.} We have data on the InstaJobs algorithm scores for approximately 193,000 jobs. We create a ranked list of the top 11 candidates for each job based on the InstaJobs score.\footnote{We restrict only to candidates who receive a notification, since outcomes are only available for these candidates. Thus, some queries will have fewer candidates if fewer than 11 people were notified for a given job.} (We use the top 11 candidates so that we can make 10 comparisons between adjacent ranks.) We then implement the listwise outcome test, using gender as the group variable and NDCG as the objective (see Example \ref{example:ndcg}). We construct tests for each individual comparison using one-sided $t$-tests, and joint tests based on moment inequality tests with least-favorable critical values that assume all moments have mean-zero (e.g. \citet{andrews_inference_2010}).\footnote{Formally, we create unconditional moments of the form $E\left[ \left(Y_{j_q(a)q} - Y_{j_q(a+1)q} \right) 1[G_{q,a} = g_1, G_{q,a+1} = g_2] \right] \geq 0 $, where $Y$ is the relevance score $Y^*$ normalized by the IDCG, as in Example \ref{example:ndcg}.} For comparison, we also can look directly at whether the score is calibrated to check whether marginally-notified candidates have the same outcomes regardless of their gender, where the outcome is the algorithm's objective function given in equation (\ref{eqn: instajobs objective}).

\paragraph{Results.} Figure \ref{fig:pointwise-moments} shows estimates of the moments comparing adjacent ranks. The left panel compares the average value of $Y$ for women in one rank relative to men in the adjacent rank below, with the x-axis denoting the rank of the woman. We see that the point estimate is negative for all ranks, indicating that the lower-ranked men actually have \textit{better} outcomes than the women ranked immediately above them, which means that we could improve the objective function (NDCG) by swapping their ranks. Moreover, the differences are individually statistically significant for 8 of the 10 ranks, and we can jointly reject the hypothesis that all of differences are non-negative ($p < 0.01$; see Table \ref{tbl:joint tests} for exact $p$-values). The right panel of Figure \ref{fig:pointwise-moments} makes similar comparisons, except between higher-ranked men and lower-ranked women, and finds only positive differences. In contrast to the results in the left panel, we see that men ranked above women do tend to have better outcomes. The listwise outcome test thus suggests that the algorithm is systematically ranking men below women despite having better outcomes. Since our data allows us to look at the scores generating the ranks, we can validate the listwise outcome test by looking at the outcomes by score directly: the binned scatter plot in Figure \ref{fig:pointwise-comparisons} shows that men with a given score do indeed have systematically higher outcomes than women with the same score, confirming the conclusion of the listwise outcome test.\footnote{This is true even if we impose a linear control for variation of scores within each score decile.}

\begin{figure}[!ht]
    \centering
    \includegraphics[width = 0.7\linewidth]{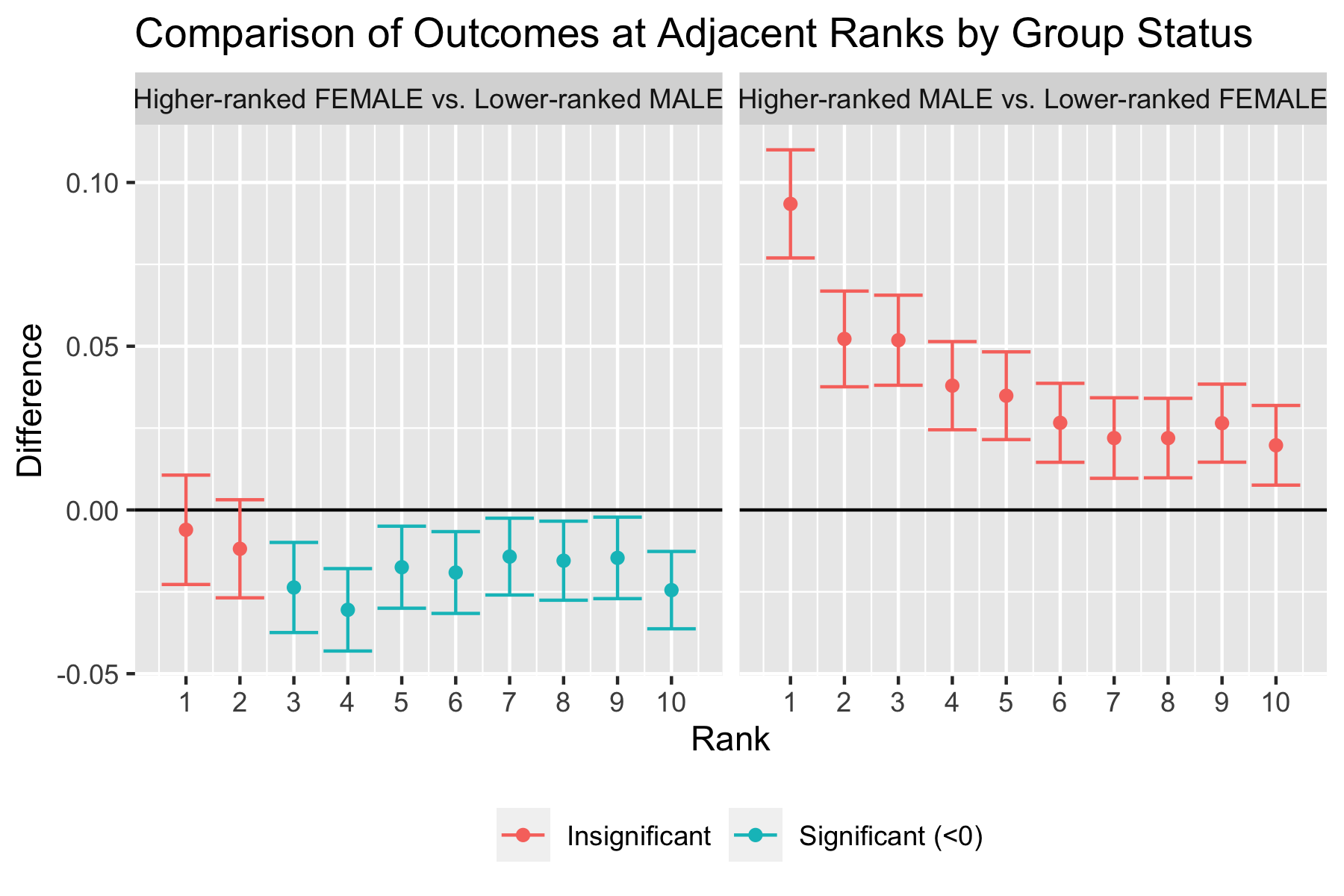}
    \caption{Moments Comparing Men and Women in Adjacent Ranks}
    \label{fig:pointwise-moments}
    \floatfoot{Note: This figure shows comparisons of the average outcome $Y$ between candidates in a given rank relative to the candidate in the next rank. The left panel shows these differences for queries where the higher-ranked candidate is female and the lower-ranked candidate is male. The right panel shows analogous comparisons when the higher-ranked candidate is male and the lower-ranked candidate is female. The $x$-axis denotes the rank of the higher-ranked candidate in the comparisons (where 1 is the highest rank).}
\end{figure}

\begin{table}[!hp]
    \includegraphics[]{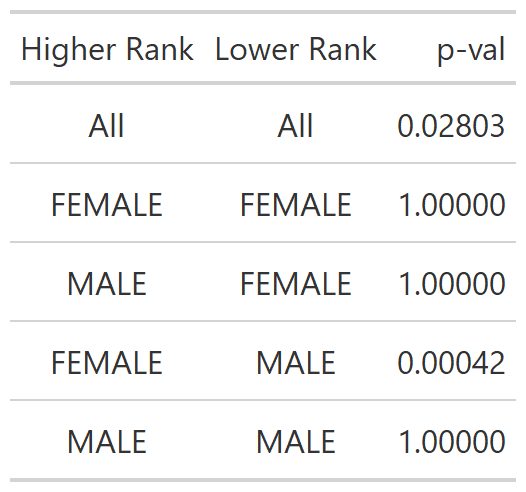}
    \caption{$p$-values for Joint Hypotheses}
    \label{tbl:joint tests}
    \floatfoot{Note: This table shows $p$-values for the joint hypothesis that all of the moments comparing higher-ranked to lower-ranked candidates are positive. The top row uses all of the moments, whereas the second only considers moments comparing women in adjacent ranks, the next row compares higher-ranked men to lower-ranked women, and so on. The $p$-values are constructed using least-favorable critical values for moment inequalities.}
\end{table}

\begin{figure}[!hbtp]
    \centering
    \includegraphics[width=0.7\linewidth]{}
    \caption{Pointwise Comparison of Outcome by Score}
    \label{fig:pointwise-comparisons}
    \floatfoot{Note: this figure shows a binned scatterplot of the outcome used by the InstaJobs algorithm against the InstaJobs algorithm score. The series are separated for male and female candidates.}
\end{figure}

\section{Conclusion}
This paper considers how to test for discrimination when the decision-maker (which could be a human or algorithm) produces a listwise ranking. We show that a sharp testable implication of unbiased behavior is a system of moment inequalities, and discuss how these can be tested in practice. We validate the methodology using the InstaJobs algorithm at LinkedIn. In future work, we plan to apply this methodology to other listwise rankings algorithms at LinkedIn.

\bibliography{Bibliography}

@STRING{AS="The Annals of Statistics"}

@inproceedings{jarvelin_ir_2000,
	address = {New York, NY, USA},
	series = {{SIGIR} '00},
	title = {{IR} evaluation methods for retrieving highly relevant documents},
	isbn = {978-1-58113-226-7},
	url = {https://doi.org/10.1145/345508.345545},
	doi = {10.1145/345508.345545},
	abstract = {This paper proposes evaluation methods based on the use of non-dichotomous relevance judgements in IR experiments. It is argued that evaluation methods should credit IR methods for their ability to retrieve highly relevant documents. This is desirable from the user point of view in modern large IR environments. The proposed methods are (1) a novel application of P-R curves and average precision computations based on separate recall bases for documents of different degrees of relevance, and (2) two novel measures computing the cumulative gain the user obtains by examining the retrieval result up to a given ranked position. We then demonstrate the use of these evaluation methods in a case study on the effectiveness of query types, based on combinations of query structures and expansion, in retrieving documents of various degrees of relevance. The test was run with a best match retrieval system (In-Query1) in a text database consisting of newspaper articles. The results indicate that the tested strong query structures are most effective in retrieving highly relevant documents. The differences between the query types are practically essential and statistically significant. More generally, the novel evaluation methods and the case demonstrate that non-dichotomous relevance assessments are applicable in IR experiments, may reveal interesting phenomena, and allow harder testing of IR methods.},
	urldate = {2021-11-15},
	booktitle = {Proceedings of the 23rd annual international {ACM} {SIGIR} conference on {Research} and development in information retrieval},
	publisher = {Association for Computing Machinery},
	author = {Järvelin, Kalervo and Kekäläinen, Jaana},
	month = jul,
	year = {2000},
	pages = {41--48},
	file = {Full Text PDF:/Users/jonathanroth/Zotero/storage/SAF9B2TU/Järvelin and Kekäläinen - 2000 - IR evaluation methods for retrieving highly releva.pdf:application/pdf},
}

@article{saint-jacques_fairness_2020,
	title = {Fairness through {Experimentation}: {Inequality} in {A}/{B} testing as an approach to responsible design},
	shorttitle = {Fairness through {Experimentation}},
	url = {http://arxiv.org/abs/2002.05819},
	abstract = {As technology continues to advance, there is increasing concern about individuals being left behind. Many businesses are striving to adopt responsible design practices and avoid any unintended consequences of their products and services, ranging from privacy vulnerabilities to algorithmic bias. We propose a novel approach to fairness and inclusiveness based on experimentation. We use experimentation because we want to assess not only the intrinsic properties of products and algorithms but also their impact on people. We do this by introducing an inequality approach to A/B testing, leveraging the Atkinson index from the economics literature. We show how to perform causal inference over this inequality measure. We also introduce the concept of site-wide inequality impact, which captures the inclusiveness impact of targeting specific subpopulations for experiments, and show how to conduct statistical inference on this impact. We provide real examples from LinkedIn, as well as an open-source, highly scalable implementation of the computation of the Atkinson index and its variance in Spark/Scala. We also provide over a year's worth of learnings -- gathered by deploying our method at scale and analyzing thousands of experiments -- on which areas and which kinds of product innovations seem to inherently foster fairness through inclusiveness.},
	urldate = {2021-09-17},
	journal = {arXiv:2002.05819 [cs, econ]},
	author = {Saint-Jacques, Guillaume and Sepehri, Amir and Li, Nicole and Perisic, Igor},
	month = feb,
	year = {2020},
	note = {arXiv: 2002.05819},
	keywords = {Computer Science - Social and Information Networks, Economics - Econometrics},
	file = {arXiv Fulltext PDF:/Users/jonathanroth/Zotero/storage/HXI4ZIMY/Saint-Jacques et al. - 2020 - Fairness through Experimentation Inequality in A.pdf:application/pdf},
}

@article{lang_race_2020,
	title = {Race {Discrimination}: {An} {Economic} {Perspective}},
	volume = {34},
	issn = {0895-3309},
	shorttitle = {Race {Discrimination}},
	url = {https://pubs.aeaweb.org/doi/10.1257/jep.34.2.68},
	doi = {10.1257/jep.34.2.68},
	abstract = {We review the empirical literature in economics on discrimination in the labor market and criminal justice system, focusing primarily on discrimination by race. We then discuss theoretical models of taste-based discrimination, particularly models of frictional labor markets and models of statistical discrimination, including recent work on invalid statistical discrimination. We explore and evaluate the evidence for and against these theories. Although there is substantial evidence of the existence of discrimination, little is known about the extent to which disparities are driven by discrimination. Finally, we argue that economists miss the important self-enforcing relationship between disparities and discrimination and the effect of disparities in one domain on discrimination in other domains.},
	language = {en},
	number = {2},
	urldate = {2021-06-30},
	journal = {Journal of Economic Perspectives},
	author = {Lang, Kevin and Spitzer, Ariella Kahn-Lang},
	month = may,
	year = {2020},
	pages = {68--89},
	file = {Lang and Spitzer - 2020 - Race Discrimination An Economic Perspective.pdf:C\:\\Users\\jorot\\Zotero\\storage\\FWFBKCY6\\Lang and Spitzer - 2020 - Race Discrimination An Economic Perspective.pdf:application/pdf}
}

@article{castillo_discrimination_2010,
	title = {Discrimination in the lab: {Does} information trump appearance?},
	volume = {68},
	issn = {0899-8256},
	shorttitle = {Discrimination in the lab},
	url = {https://www.sciencedirect.com/science/article/pii/S0899825609000827},
	doi = {10.1016/j.geb.2009.04.015},
	abstract = {Using a laboratory experiment, we find evidence consistent with statistical discrimination in a public good and group formation game. In the game, payoff relevant information is presented to subjects, thereby making it costly to discriminate when choosing group members. We find that behavior is correlated with race and people use race to predict behavior. However, race only matters when information on behavior is absent. These results are further confirmed when incentives are in place to encourage behavior that is counter to stereotypes. Not all subjects discriminate in the same way, suggesting unfamiliarity and some in-group, out-group bias. Overall, the evidence points to a lack of information rather than discriminatory preferences.},
	language = {en},
	number = {1},
	urldate = {2021-06-24},
	journal = {Games and Economic Behavior},
	author = {Castillo, Marco and Petrie, Ragan},
	month = jan,
	year = {2010},
	pages = {50--59},
	file = {ScienceDirect Full Text PDF:C\:\\Users\\jorot\\Zotero\\storage\\M9XN9ASI\\Castillo and Petrie - 2010 - Discrimination in the lab Does information trump .pdf:application/pdf;ScienceDirect Snapshot:C\:\\Users\\jorot\\Zotero\\storage\\36V6JGLF\\S0899825609000827.html:text/html}
}

@article{roth_evolution_1984,
	title = {The {Evolution} of the {Labor} {Market} for {Medical} {Interns} and {Residents}: {A} {Case} {Study} in {Game} {Theory}},
	volume = {92},
	issn = {0022-3808},
	shorttitle = {The {Evolution} of the {Labor} {Market} for {Medical} {Interns} and {Residents}},
	url = {https://www.journals.uchicago.edu/doi/abs/10.1086/261272},
	doi = {10.1086/261272},
	abstract = {The organization of the labor market for medical interns and residents underwent a number of changes before taking its present form in 1951. The record of these changes and the problems that prompted them provides an unusual opportunity to study the forces at work in markets of this kind. The present paper begins with a brief history and then presents a game-theoretic analysis to explain the orderly operation and longevity of the current market, in contrast to the turmoil that characterized various earlier short-lived attempts to organize the market. An analysis is also given of some contemporary problems facing the market. A subsidiary theme of the paper concerns the history of ideas: the problems encountered in the organization of this market, and some of the solutions arrived at, anticipated the discussion of such issues in the literature of economics and game theory.},
	number = {6},
	urldate = {2021-06-24},
	journal = {Journal of Political Economy},
	author = {Roth, Alvin E.},
	month = dec,
	year = {1984},
	note = {Publisher: The University of Chicago Press},
	pages = {991--1016},
	file = {Submitted Version:C\:\\Users\\jorot\\Zotero\\storage\\D9VXWJG6\\Roth - 1984 - The Evolution of the Labor Market for Medical Inte.pdf:application/pdf;Snapshot:C\:\\Users\\jorot\\Zotero\\storage\\XNCK2Q7U\\261272.html:text/html}
}

@article{rambachan_economic_2020,
	title = {An {Economic} {Perspective} on {Algorithmic} {Fairness}},
	volume = {110},
	issn = {2574-0768},
	url = {https://www.aeaweb.org/articles?id=10.1257/pandp.20201036},
	doi = {10.1257/pandp.20201036},
	abstract = {There are widespread concerns that the growing use of machine learning algorithms in important decisions may reproduce and reinforce existing discrimination against legally protected groups. Most of the attention to date on issues of "algorithmic bias" or "algorithmic fairness" has come from computer scientists and machine learning researchers. We argue that concerns about algorithmic fairness are at least as much about questions of how discrimination manifests itself in data, decision-making under uncertainty, and optimal regulation. To fully answer these questions, an economic framework is necessary—and as a result, economists have much to contribute.},
	language = {en},
	urldate = {2021-06-24},
	journal = {AEA Papers and Proceedings},
	author = {Rambachan, Ashesh and Kleinberg, Jon and Ludwig, Jens and Mullainathan, Sendhil},
	month = may,
	year = {2020},
	keywords = {Neural Networks and Related Topics, Equity, Justice, Inequality, and Other Normative Criteria and Measurement, Criteria for Decision-Making under Risk and Uncertainty, Economics of Minorities, Races, Indigenous Peoples, and Immigrants, Non-labor Discrimination, Non-labor Discrimination, Economics of Gender},
	pages = {91--95}
}

@article{corbett-davies_measure_2018,
	title = {The {Measure} and {Mismeasure} of {Fairness}: {A} {Critical} {Review} of {Fair} {Machine} {Learning}},
	abstract = {The nascent ﬁeld of fair machine learning aims to ensure that decisions guided by algorithms are equitable. Over the last several years, three formal deﬁnitions of fairness have gained prominence: (1) anti-classiﬁcation, meaning that protected attributes—like race, gender, and their proxies—are not explicitly used to make decisions; (2) classiﬁcation parity, meaning that common measures of predictive performance (e.g., false positive and false negative rates) are equal across groups deﬁned by the protected attributes; and (3) calibration, meaning that conditional on risk estimates, outcomes are independent of protected attributes. Here we show that all three of these fairness deﬁnitions suﬀer from signiﬁcant statistical limitations. Requiring anticlassiﬁcation or classiﬁcation parity can, perversely, harm the very groups they were designed to protect; and calibration, though generally desirable, provides little guarantee that decisions are equitable. In contrast to these formal fairness criteria, we argue that it is often preferable to treat similarly risky people similarly, based on the most statistically accurate estimates of risk that one can produce. Such a strategy, while not universally applicable, often aligns well with policy objectives; notably, this strategy will typically violate both anti-classiﬁcation and classiﬁcation parity. In practice, it requires signiﬁcant eﬀort to construct suitable risk estimates. One must carefully deﬁne and measure the targets of prediction to avoid retrenching biases in the data. But, importantly, one cannot generally address these diﬃculties by requiring that algorithms satisfy popular mathematical formalizations of fairness. By highlighting these challenges in the foundation of fair machine learning, we hope to help researchers and practitioners productively advance the area.},
	language = {en},
	author = {Corbett-Davies, Sam and Goel, Sharad},
	year = {2018},
	pages = {25},
	file = {Corbett-Davies and Goel - The Measure and Mismeasure of Fairness A Critical.pdf:C\:\\Users\\jorot\\Zotero\\storage\\4JQXZFTZ\\Corbett-Davies and Goel - The Measure and Mismeasure of Fairness A Critical.pdf:application/pdf}
}

@article{geyik_fairness-aware_2019,
	title = {Fairness-{Aware} {Ranking} in {Search} \& {Recommendation} {Systems} with {Application} to {LinkedIn} {Talent} {Search}},
	url = {http://arxiv.org/abs/1905.01989},
	doi = {10.1145/3292500.3330691},
	abstract = {We present a framework for quantifying and mitigating algorithmic bias in mechanisms designed for ranking individuals, typically used as part of web-scale search and recommendation systems. We first propose complementary measures to quantify bias with respect to protected attributes such as gender and age. We then present algorithms for computing fairness-aware re-ranking of results. For a given search or recommendation task, our algorithms seek to achieve a desired distribution of top ranked results with respect to one or more protected attributes. We show that such a framework can be tailored to achieve fairness criteria such as equality of opportunity and demographic parity depending on the choice of the desired distribution. We evaluate the proposed algorithms via extensive simulations over different parameter choices, and study the effect of fairness-aware ranking on both bias and utility measures. We finally present the online A/B testing results from applying our framework towards representative ranking in LinkedIn Talent Search, and discuss the lessons learned in practice. Our approach resulted in tremendous improvement in the fairness metrics (nearly three fold increase in the number of search queries with representative results) without affecting the business metrics, which paved the way for deployment to 100\% of LinkedIn Recruiter users worldwide. Ours is the first large-scale deployed framework for ensuring fairness in the hiring domain, with the potential positive impact for more than 630M LinkedIn members.},
	urldate = {2021-06-24},
	journal = {Proceedings of the 25th ACM SIGKDD International Conference on Knowledge Discovery \& Data Mining},
	author = {Geyik, Sahin Cem and Ambler, Stuart and Kenthapadi, Krishnaram},
	month = jul,
	year = {2019},
	note = {arXiv: 1905.01989},
	keywords = {Computer Science - Information Retrieval, Computer Science - Machine Learning},
	pages = {2221--2231},
	annote = {Comment: This paper has been accepted for publication at ACM KDD 2019}
}

@article{zehlike_fair_2017,
	title = {{FA}*{IR}: {A} {Fair} {Top}-k {Ranking} {Algorithm}},
	shorttitle = {{FA}*{IR}},
	url = {https://arxiv.org/abs/1706.06368v3},
	doi = {10.1145/3132847.3132938},
	abstract = {In this work, we define and solve the Fair Top-k Ranking problem, in which we want to determine a subset of k candidates from a large pool of n {\textgreater}{\textgreater} k candidates, maximizing utility (i.e., select the "best" candidates) subject to group fairness criteria. Our ranked group fairness definition extends group fairness using the standard notion of protected groups and is based on ensuring that the proportion of protected candidates in every prefix of the top-k ranking remains statistically above or indistinguishable from a given minimum. Utility is operationalized in two ways: (i) every candidate included in the top-\$k\$ should be more qualified than every candidate not included; and (ii) for every pair of candidates in the top-k, the more qualified candidate should be ranked above. An efficient algorithm is presented for producing the Fair Top-k Ranking, and tested experimentally on existing datasets as well as new datasets released with this paper, showing that our approach yields small distortions with respect to rankings that maximize utility without considering fairness criteria. To the best of our knowledge, this is the first algorithm grounded in statistical tests that can mitigate biases in the representation of an under-represented group along a ranked list.},
	language = {en},
	urldate = {2021-06-24},
	author = {Zehlike, Meike and Bonchi, Francesco and Castillo, Carlos and Hajian, Sara and Megahed, Mohamed and Baeza-Yates, Ricardo},
	month = jun,
	year = {2017},
	file = {Snapshot:C\:\\Users\\jorot\\Zotero\\storage\\FJIARPAY\\1706.html:text/html;Full Text PDF:C\:\\Users\\jorot\\Zotero\\storage\\F4C9FW57\\Zehlike et al. - 2017 - FAIR A Fair Top-k Ranking Algorithm.pdf:application/pdf}
}

@inproceedings{singh_fairness_2018,
	address = {London United Kingdom},
	title = {Fairness of {Exposure} in {Rankings}},
	isbn = {978-1-4503-5552-0},
	url = {https://dl.acm.org/doi/10.1145/3219819.3220088},
	doi = {10.1145/3219819.3220088},
	abstract = {Rankings are ubiquitous in the online world today. As we have transitioned from finding books in libraries to ranking products, jobs, job applicants, opinions and potential romantic partners, there is a substantial precedent that ranking systems have a responsibility not only to their users but also to the items being ranked. To address these often conflicting responsibilities, we propose a conceptual and computational framework that allows the formulation of fairness constraints on rankings in terms of exposure allocation. As part of this framework, we develop efficient algorithms for finding rankings that maximize the utility for the user while provably satisfying a specifiable notion of fairness. Since fairness goals can be application specific, we show how a broad range of fairness constraints can be implemented using our framework, including forms of demographic parity, disparate treatment, and disparate impact constraints. We illustrate the effect of these constraints by providing empirical results on two ranking problems.},
	language = {en},
	urldate = {2021-06-24},
	booktitle = {Proceedings of the 24th {ACM} {SIGKDD} {International} {Conference} on {Knowledge} {Discovery} \& {Data} {Mining}},
	publisher = {ACM},
	author = {Singh, Ashudeep and Joachims, Thorsten},
	month = jul,
	year = {2018},
	pages = {2219--2228},
	file = {Singh and Joachims - 2018 - Fairness of Exposure in Rankings.pdf:C\:\\Users\\jorot\\Zotero\\storage\\GHK29WME\\Singh and Joachims - 2018 - Fairness of Exposure in Rankings.pdf:application/pdf}
}

@article{ghosh_when_2021,
	title = {When {Fair} {Ranking} {Meets} {Uncertain} {Inference}},
	url = {https://arxiv.org/abs/2105.02091v1},
	doi = {10.1145/3404835.3462850},
	abstract = {Existing fair ranking systems, especially those designed to be demographically fair, assume that accurate demographic information about individuals is available to the ranking algorithm. In practice, however, this assumption may not hold -- in real-world contexts like ranking job applicants or credit seekers, social and legal barriers may prevent algorithm operators from collecting peoples' demographic information. In these cases, algorithm operators may attempt to infer peoples' demographics and then supply these inferences as inputs to the ranking algorithm. In this study, we investigate how uncertainty and errors in demographic inference impact the fairness offered by fair ranking algorithms. Using simulations and three case studies with real datasets, we show how demographic inferences drawn from real systems can lead to unfair rankings. Our results suggest that developers should not use inferred demographic data as input to fair ranking algorithms, unless the inferences are extremely accurate.},
	language = {en},
	urldate = {2021-06-24},
	author = {Ghosh, Avijit and Dutt, Ritam and Wilson, Christo},
	month = may,
	year = {2021},
	file = {Snapshot:C\:\\Users\\jorot\\Zotero\\storage\\6R2QS7NW\\2105.html:text/html;Full Text PDF:C\:\\Users\\jorot\\Zotero\\storage\\PY94KIGM\\Ghosh et al. - 2021 - When Fair Ranking Meets Uncertain Inference.pdf:application/pdf}
}

@article{celis_ranking_2017,
	title = {Ranking with {Fairness} {Constraints}},
	url = {https://arxiv.org/abs/1704.06840v4},
	abstract = {Ranking algorithms are deployed widely to order a set of items in applications such as search engines, news feeds, and recommendation systems. Recent studies, however, have shown that, left unchecked, the output of ranking algorithms can result in decreased diversity in the type of content presented, promote stereotypes, and polarize opinions. In order to address such issues, we study the following variant of the traditional ranking problem when, in addition, there are fairness or diversity constraints. Given a collection of items along with 1) the value of placing an item in a particular position in the ranking, 2) the collection of sensitive attributes (such as gender, race, political opinion) of each item and 3) a collection of constraints that, for each k, bound the number of items with each attribute that are allowed to appear in the top k positions of the ranking, the goal is to output a ranking that maximizes the value with respect to the original rank quality metric while respecting the constraints. This problem encapsulates various well-studied problems related to bipartite and hypergraph matching as special cases and turns out to be hard to approximate even with simple constraints. Our main technical contributions are fast exact and approximation algorithms along with complementary hardness results that, together, come close to settling the approximability of this constrained ranking maximization problem. Unlike prior work on the constrained matching problems, our algorithm runs in linear time, even when the number of constraints is large, its approximation ratio does not depend on the number of constraints, and it produces solutions with small constraint violations. Our results rely on insights about the constrained matching problem when the objective satisfies properties that appear in common ranking metrics such as Discounted Cumulative Gain, Spearman's rho or Bradley-Terry.},
	language = {en},
	urldate = {2021-06-24},
	author = {Celis, L. Elisa and Straszak, Damian and Vishnoi, Nisheeth K.},
	month = apr,
	year = {2017},
	file = {Full Text PDF:C\:\\Users\\jorot\\Zotero\\storage\\ADABVPY8\\Celis et al. - 2017 - Ranking with Fairness Constraints.pdf:application/pdf;Snapshot:C\:\\Users\\jorot\\Zotero\\storage\\YGEIFHNN\\1704.html:text/html}
}

@article{pitoura_fairness_2021,
	title = {Fairness in {Rankings} and {Recommendations}: {An} {Overview}},
	shorttitle = {Fairness in {Rankings} and {Recommendations}},
	url = {http://arxiv.org/abs/2104.05994},
	abstract = {We increasingly depend on a variety of data-driven algorithmic systems to assist us in many aspects of life. Search engines and recommender systems amongst others are used as sources of information and to help us in making all sort of decisions from selecting restaurants and books, to choosing friends and careers. This has given rise to important concerns regarding the fairness of such systems. In this work, we aim at presenting a toolkit of definitions, models and methods used for ensuring fairness in rankings and recommendations. Our objectives are three-fold: (a) to provide a solid framework on a novel, quickly evolving, and impactful domain, (b) to present related methods and put them into perspective, and (c) to highlight open challenges and research paths for future work.},
	urldate = {2021-06-17},
	journal = {arXiv:2104.05994 [cs]},
	author = {Pitoura, Evaggelia and Stefanidis, Kostas and Koutrika, Georgia},
	month = apr,
	year = {2021},
	note = {arXiv: 2104.05994},
	keywords = {Computer Science - Databases},
	file = {arXiv Fulltext PDF:C\:\\Users\\jorot\\Zotero\\storage\\GJP3VEVD\\Pitoura et al. - 2021 - Fairness in Rankings and Recommendations An Overv.pdf:application/pdf;arXiv.org Snapshot:C\:\\Users\\jorot\\Zotero\\storage\\GMDF7CTL\\2104.html:text/html}
}

@article{canay_practical_2017,
	title = {Practical and theoretical advances in inference for partially identified models},
	volume = {2},
	journal = {Advances in Economics and Econometrics},
	author = {Canay, Ivan A and Shaikh, Azeem M},
	year = {2017},
	pages = {271--306}
}

@incollection{molinari_chapter_2020,
	series = {Handbook of {Econometrics}, {Volume} {7A}},
	title = {Chapter 5 - {Microeconometrics} with partial identification},
	volume = {7},
	url = {https://www.sciencedirect.com/science/article/pii/S1573441220300027},
	abstract = {This chapter reviews the microeconometrics literature on partial identification, focusing on the developments of the last thirty years. The topics presented illustrate that the available data combined with credible maintained assumptions may yield much information about a parameter of interest, even if they do not reveal it exactly. Special attention is devoted to discussing the challenges associated with, and some of the solutions put forward to, (1) obtain a tractable characterization of the values for the parameters of interest which are observationally equivalent, given the available data and maintained assumptions; (2) estimate this set of values; (3) conduct test of hypotheses and make confidence statements. The chapter reviews advances in partial identification analysis both as applied to learning (functionals of) probability distributions that are well-defined in the absence of models, as well as to learning parameters that are well-defined only in the context of particular models. A simple organizing principle is highlighted: the source of the identification problem can often be traced to a collection of random variables that are consistent with the available data and maintained assumptions. This collection may be part of the observed data or be a model implication. In either case, it can be formalized as a random set. Random set theory is then used as a mathematical framework to unify a number of special results and produce a general methodology to carry out partial identification analysis.},
	language = {en},
	urldate = {2021-04-08},
	booktitle = {Handbook of {Econometrics}},
	publisher = {Elsevier},
	author = {Molinari, Francesca},
	editor = {Durlauf, Steven N. and Hansen, Lars Peter and Heckman, James J. and Matzkin, Rosa L.},
	month = jan,
	year = {2020},
	doi = {10.1016/bs.hoe.2020.05.002},
	keywords = {Auction models, Computational methods, Criterion function approach, Discrete choice models, Incomplete data and models, Model misspecification, Moment inequalities, Partial identification, Random sets, Support function approach},
	pages = {355--486},
	file = {ScienceDirect Snapshot:C\:\\Users\\jorot\\Zotero\\storage\\IBXDHKWM\\S1573441220300027.html:text/html;Submitted Version:C\:\\Users\\jorot\\Zotero\\storage\\X968EC3Y\\Molinari - 2020 - Chapter 5 - Microeconometrics with partial identif.pdf:application/pdf}
}

@article{hardt_equality_2016,
	title = {Equality of {Opportunity} in {Supervised} {Learning}},
	url = {http://arxiv.org/abs/1610.02413},
	abstract = {We propose a criterion for discrimination against a specified sensitive attribute in supervised learning, where the goal is to predict some target based on available features. Assuming data about the predictor, target, and membership in the protected group are available, we show how to optimally adjust any learned predictor so as to remove discrimination according to our definition. Our framework also improves incentives by shifting the cost of poor classification from disadvantaged groups to the decision maker, who can respond by improving the classification accuracy. In line with other studies, our notion is oblivious: it depends only on the joint statistics of the predictor, the target and the protected attribute, but not on interpretation of individualfeatures. We study the inherent limits of defining and identifying biases based on such oblivious measures, outlining what can and cannot be inferred from different oblivious tests. We illustrate our notion using a case study of FICO credit scores.},
	urldate = {2021-06-17},
	journal = {arXiv:1610.02413 [cs]},
	author = {Hardt, Moritz and Price, Eric and Srebro, Nathan},
	month = oct,
	year = {2016},
	note = {arXiv: 1610.02413
version: 1},
	keywords = {Computer Science - Machine Learning},
	file = {arXiv Fulltext PDF:C\:\\Users\\jorot\\Zotero\\storage\\5D9GYD28\\Hardt et al. - 2016 - Equality of Opportunity in Supervised Learning.pdf:application/pdf}
}

@inproceedings{beutel_fairness_2019,
	address = {Anchorage AK USA},
	title = {Fairness in {Recommendation} {Ranking} through {Pairwise} {Comparisons}},
	isbn = {978-1-4503-6201-6},
	url = {https://dl.acm.org/doi/10.1145/3292500.3330745},
	doi = {10.1145/3292500.3330745},
	abstract = {Recommender systems are one of the most pervasive applications of machine learning in industry, with many services using them to match users to products or information. As such it is important to ask: what are the possible fairness risks, how can we quantify them, and how should we address them? In this paper we offer a set of novel metrics for evaluating algorithmic fairness concerns in recommender systems. In particular we show how measuring fairness based on pairwise comparisons from randomized experiments provides a tractable means to reason about fairness in rankings from recommender systems. Building on this metric, we offer a new regularizer to encourage improving this metric during model training and thus improve fairness in the resulting rankings. We apply this pairwise regularization to a large-scale, production recommender system and show that we are able to significantly improve the system’s pairwise fairness.},
	language = {en},
	urldate = {2021-06-17},
	booktitle = {Proceedings of the 25th {ACM} {SIGKDD} {International} {Conference} on {Knowledge} {Discovery} \& {Data} {Mining}},
	publisher = {ACM},
	author = {Beutel, Alex and Chen, Jilin and Doshi, Tulsee and Qian, Hai and Wei, Li and Wu, Yi and Heldt, Lukasz and Zhao, Zhe and Hong, Lichan and Chi, Ed H. and Goodrow, Cristos},
	month = jul,
	year = {2019},
	pages = {2212--2220},
	file = {Beutel et al. - 2019 - Fairness in Recommendation Ranking through Pairwis.pdf:C\:\\Users\\jorot\\Zotero\\storage\\IES2TBSM\\Beutel et al. - 2019 - Fairness in Recommendation Ranking through Pairwis.pdf:application/pdf}
}

@book{Becker1957,
    title={The Economics of Discrimination},
    author={Gary Becker},
    publisher={University of Chicago Press},
    city={Chicago},
    year={1957}
}

@article{andrews_inference_2013,
	title = {Inference {Based} on {Conditional} {Moment} {Inequalities}},
	volume = {81},
	copyright = {© 2013 The Econometric Society},
	issn = {1468-0262},
	url = {https://onlinelibrary.wiley.com/doi/abs/10.3982/ECTA9370},
	doi = {10.3982/ECTA9370},
	abstract = {In this paper, we propose an instrumental variable approach to constructing confidence sets (CS's) for the true parameter in models defined by conditional moment inequalities/equalities. We show that by properly choosing instrument functions, one can transform conditional moment inequalities/equalities into unconditional ones without losing identification power. Based on the unconditional moment inequalities/equalities, we construct CS's by inverting Cramér–von Mises-type or Kolmogorov–Smirnov-type tests. Critical values are obtained using generalized moment selection (GMS) procedures. We show that the proposed CS's have correct uniform asymptotic coverage probabilities. New methods are required to establish these results because an infinite-dimensional nuisance parameter affects the asymptotic distributions. We show that the tests considered are consistent against all fixed alternatives and typically have power against n−1/2-local alternatives to some, but not all, sequences of distributions in the null hypothesis. Monte Carlo simulations for five different models show that the methods perform well in finite samples.},
	language = {en},
	number = {2},
	urldate = {2021-06-17},
	journal = {Econometrica},
	author = {Andrews, Donald W. K. and Shi, Xiaoxia},
	year = {2013},
	note = {\_eprint: https://onlinelibrary.wiley.com/doi/pdf/10.3982/ECTA9370},
	keywords = {asymptotic power, Asymptotic size, conditional moment inequalities, confidence set, Cramér–von Mises, generalized moment selection, Kolmogorov–Smirnov, moment inequalities},
	pages = {609--666},
	file = {Snapshot:C\:\\Users\\jorot\\Zotero\\storage\\YQHSM28S\\ECTA9370.html:text/html}
}

@article{andrews_inference_2010,
	title = {Inference for Parameters Defined by Moment Inequalities Using Generalized Moment Selection},
	volume = {78},
	rights = {© 2010 The Econometric Society},
	issn = {1468-0262},
	url = {https://onlinelibrary.wiley.com/doi/abs/10.3982/ECTA7502},
	doi = {10.3982/ECTA7502},
	abstract = {The topic of this paper is inference in models in which parameters are defined by moment inequalities and/or equalities. The parameters may or may not be identified. This paper introduces a new class of confidence sets and tests based on generalized moment selection ({GMS}). {GMS} procedures are shown to have correct asymptotic size in a uniform sense and are shown not to be asymptotically conservative. The power of {GMS} tests is compared to that of subsampling, m out of n bootstrap, and “plug-in asymptotic” ({PA}) tests. The latter three procedures are the only general procedures in the literature that have been shown to have correct asymptotic size (in a uniform sense) for the moment inequality/equality model. {GMS} tests are shown to have asymptotic power that dominates that of subsampling, m out of n bootstrap, and {PA} tests. Subsampling and m out of n bootstrap tests are shown to have asymptotic power that dominates that of {PA} tests.},
	pages = {119--157},
	number = {1},
	journal = {Econometrica},
	author = {Andrews, Donald W. K. and Soares, Gustavo},
	urldate = {2019-07-23},
	year = {2010},
	langid = {english},
	keywords = {asymptotic power, Asymptotic size, confidence set, exact size, generalized moment selection, m out of n bootstrap, moment inequalities, moment selection, subsampling, test},
	file = {Snapshot:/Users/jonathanroth/Zotero/storage/CXU6HC75/ECTA7502.html:text/html;Submitted Version:/Users/jonathanroth/Zotero/storage/MPSYTXXU/Andrews and Soares - 2010 - Inference for Parameters Defined by Moment Inequal.pdf:application/pdf}
}


\counterwithin{figure}{section}
\counterwithin{table}{section}

\end{document}